\newcommand{\dd}{{\mathsf{d}}}
\newcommand{\R}{\ensuremath{{\mathbb R}}}
\newcommand{\KK}{{\mathcal K}}
\newtheorem{lem}{{Lemma}}
\newtheorem{thm}{{Theorem}}
\newtheorem{asm}{{Assumption}}
\newtheorem{rem}{{Remark}}
\newtheorem{exam}{{Example}}
\title{\LARGE \bf Model Reference Gaussian Process Regression:\\ Data-Driven Output Feedback Controller}
\author{Hyuntae Kim, Hamin Chang, and Hyungbo Shim
\thanks{This work was supported by the grant from Hyundai Motor Company's R\&D Division.}
\thanks{All authors are with ASRI, Department of Electrical and Computer Engineering,
Seoul National University, 1 Gwanak-ro, Gwanak-gu, Seoul, 08826, Korea. Corresponding author: {\tt\small hshim@snu.ac.kr}}%
}
\begin{document}

\maketitle
\thispagestyle{plain}
\pagestyle{plain}

\begin{abstract}

Data-driven controls using Gaussian process regression have recently gained much attention. 
In such approaches, system identification by Gaussian process regression is mostly followed by model-based controller designs.
However, the outcomes of Gaussian process regression are often too complicated to apply conventional control designs, which makes the numerical design such as model predictive control employed in many cases.
To overcome the restriction, our idea is to perform Gaussian process regression to the inverse of the plant with the same input/output data for the conventional regression.
With the inverse, one can design a model reference controller without resorting to numerical control methods.
This paper considers single-input single-output (SISO) discrete-time nonlinear systems of minimum phase with relative degree one. 
It is highlighted that the model reference Gaussian process regression controller is designed directly from pre-collected input/output data without system identification. 
\end{abstract}

\section{Introduction}

Gaussian process regression (GPR) \cite{GPML}, one of the most well-known regression tools for nonlinear functions, has been extensively used in various fields by virtue of the following properties \cite{KocijanBook}.
First, since it is a nonparametric method, it has some flexibility to deal with a large amount of data.
Secondly, prior knowledge of the regression target can easily be incorporated. 
Finally, it gives some confidence information about the regression result, which can be utilized to measure the regression error.

Particularly in control systems, GPR has been mainly applied for identifying unknown nonlinear systems using input/output or even state data before designing a model-based controller for the identified model. For instance, \cite{RDCA03} and \cite{JRCA04} show that a model predictive controller can be designed based on the model identified by GPR.
Moreover, its real world applications are presented in \cite{GEF16} and \cite{CATJ16} for quadrotors and mobile robots, respectively.
In addition, combining the prior knowledge of a nominal model, 
\cite{LJM19} and \cite{JLAM19} demonstrate the utility of such Gaussian process-based model predictive control (GP-MPC) method in autonomous racing systems by identifying a residual model instead of the full dynamical system.
Also, \cite{Scaramuzza} presents real world experiments of quadrotors controlled by the GP-MPC approach, where only aerodynamic effects on quadrotors are modeled by Gaussian process.
On the other hand, \cite{Umlauft17} and \cite{Umlauft18} propose a feedback linearization controller for the system, which is identified by GPR. They also provide Lyapunov stability analysis of the controlled system based on the result concerning the error of the identification in \cite{12TIT}. 
On the other hand, \cite{Umlauft20} and \cite{Jiao22} propose event triggered online learning of GPR in order to increase data efficiency.

However, most of these studies focus only on the system identification capability of the GPR in the sense that controllers should be designed only after the system identification by using the GPR is completed.
This leads to a problem with controller design because even if the given system has a relatively simple analytic formula, its identified model by the GPR can be too complicated consisting of a summation of as many terms as the number of data points.
This complexity has often restricted applicable control methods, so that numerical methods such as MPC (model predictive control) are typically employed for the identified model.
Since numerical control methods require a certain amount of online computation resource, its utility can sometimes be limited.

To overcome the restriction, we propose identification of the inverse of the given system by GPR, with the purpose of using it for model reference control.
In this way, we can bypass numerical controls and can combine classical controls resulting in a data-driven controller, which we call model reference Gaussian process regression (MR-GPR) control in this paper.
Since it is natural to assume that we have access to only input/output measurements of the plant, we propose the MR-GPR controller in the form of an output feedback control.
Therefore, the GPR is performed only with input/output data of the system. 
Since our approach is based on input/output inversion in some sense, a few limitations naturally follow such as causality and minimum phase issues.
In this paper, we assume that the system has relative degree one to resolve the causality issue, which is not very restrictive because a sampled-data system of a continuous-time system generically has relative degree one.
Moreover, we assume that the system is of minimum phase.

This paper is organized as follows. 
The problem formulation with a class of nonlinear systems under consideration and a couple of assumptions on the class of systems are in Section \ref{sec:problem}. 
In Section \ref{main}, we propose the data-driven MR-GPR controller and explain how to design it using the GPR. Also, a stability analysis of the closed-loop system with the MR-GPR controller is presented. An illustrative example that demonstrates the usefulness of the MR-GPR controller is given in Section \ref{sec:ex}. Finally, this paper is summarized and concluded in Section \ref{sec:conc}.

{\it Notation:}
For column vectors $a$ and $b$,  $[a;b]$ denotes $[a^T,b^T]^T$. 
For discrete-time vector sequences $y(t)$ and $z(t)$,
we define a vector
$$z_{[k,k+T]} := [z(k); z(k+1); \cdots; z(k+T)],$$
and a set
\begin{align*}
    &\{(y(t),z(t))\}^{k+T}_{t=k}\\
    &\quad\quad:= \{(y(k),z(k)),\cdots,(y(k+T),z(k+T))\}.
\end{align*}

\section{Problem Formulation}\label{sec:problem}

Consider a single-input single-output (SISO) nonlinear discrete-time control-affine system with relative degree one in Byrnes-Isidori normal form \cite{Normalform}:
\begin{subequations}\label{SYSTEM}
\begin{align}
y(t+1) &= f(z(t),y(t)) + g(z(t),y(t))u(t) \label{SYSTEM1} \\
z(t+1) &= h( z(t), y(t)) \label{SYSTEM2} 
\end{align}
\end{subequations}
where $u(t)\in\mathbb{R}$ is the input, $z(t)\in\mathbb{R}^{n-1}$ is the state of the zero dynamics, and $y(t)\in\mathbb{R}$ is the output.
It is assumed that the functions $f(\cdot,\cdot)$, $g(\cdot,\cdot)$, and $h(\cdot,\cdot)$ are unknown and only the input/output of the system are available as measurements. 
Also, we assume that the functions {$f(\cdot,\cdot)$}, {$g(\cdot,\cdot)$}, and {$h(\cdot,\cdot)$} are smooth.
In addition, the following assumption is given.

\begin{asm}\label{ASM:1}
	The system \eqref{SYSTEM} satisfies the followings:
	\begin{itemize}
        \item[(a)] The system has global relative degree one, or equivalently, $g(z,y) \neq 0$ for all $(z,y) \in \mathbb{R}^{n}$. 
        Also, the system dimension $n$ and the global relative degree one are known.
        \item[(b)] The internal dynamics \eqref{SYSTEM2} is input-to-state stable with the input being $y$. $\hfill\Box$
  	\end{itemize}  
\end{asm}

If the plant to be controlled is a continuous-time physical system, then its discretization generically yields a discrete-time system of relative degree one \cite{Goodwin}.
Therefore, the system description of \eqref{SYSTEM} may not be too restrictive.
Now, we assume observability of the system (i.e., observability for the state $z$) as follows.

\begin{asm}\label{ASM:OBS}
There exists a smooth mapping $\mathcal{O}: \mathbb{R}^{2n-1} \to \mathbb{R}^{n-1}$ that determines the state $z(t)$ as
\begin{align*}
z(t) = \mathcal{O}([y_{[t,t+n-1]};u_{[t,t+n-2]}])
\end{align*}
for any pair of input $u_{[t,t+n-2]}$ and output $y_{[t,t+n-1]}$ of the system $\eqref{SYSTEM}$.
$\hfill\Box$
\end{asm}

\begin{exam}
For simplicity, let us write $y(t)$ by $y_t$ in this example.
When the system \eqref{SYSTEM} has the form of
\begin{align}\label{eq:exsys}
\begin{split}
y_{t+1} &= f_z(z_t)+f_y(y_t) + u_t \\
z_{t+1} &= h(z_t, y_t)
\end{split}
\end{align}
then Assumption \ref{ASM:OBS} holds if, for any input/output trajectory $u_{[t,t+n-2]}$ and $y_{[t,t+n-1]}$ of \eqref{SYSTEM}, there exists a unique solution $z^* \in \R^{n-1}$ to the equations
\begin{align*}
f_z(z^*) &= y_{t+1} - f_y(y_t) - u_t, \\
f_z(h(z^*,y_t)) &= y_{t+2} - f_y(y_{t+1}) - u_{t+1}, \\
f_z(h(h(z^*,y_t),&f_y(y_t)+f_z(z^*)+u_t)) \\
&= y_{t+3} - f_y(y_{t+2}) - u_{t+2}, \\
& \vdots \\
f_z(h(\cdots(h(z^*,y_t),&f_y(y_t)+f_z(z^*)+u_t), \cdots)) \\
&= y_{t+n-1} - f_y(y_{t+n-2}) - u_{t+n-2}
\end{align*}
which is derived directly from the system \eqref{eq:exsys}.
In this case, $z(t) = z^*$.
$\hfill\Box$
\end{exam}

On the other hand, let us consider a stable reference model given by
\begin{equation}\label{eq:real_closed}
y_r(t+1) = f_r(y_r(t))	\quad \in \R
\end{equation}
which satisfies the additional assumption that
$$y_r(t+1) = f_r(y_r(t)) + \eta(t)$$
is input-to-state stable when $\eta$ is viewed as an input.
In order to make the controlled system \eqref{SYSTEM} become the reference model \eqref{eq:real_closed}, the controller should be
\begin{align}\label{idealstate}
u(t) = \frac{f_r(y(t)) - f(z(t),y(t))}{g(z(t),y(t))}.
\end{align}
For designing the controller \eqref{idealstate}, however, not only the functions $f(\cdot,\cdot)$ and $g(\cdot,\cdot)$ are needed, but also the state $z(t)$ needs to be measured.
In this paper, we present a method to construct the controller \eqref{idealstate} by using only the input/output data of the system \eqref{SYSTEM}.

\section{Main Result}\label{main}

In this section, we design a data-driven controller that can produce almost the same control input as \eqref{idealstate} by using GPR trained by input/output data of the system \eqref{SYSTEM}.

We firstly show that the state $z(t)$ can be expressed by the input/output history of the system \eqref{SYSTEM}.
For this, let
\begin{align}\label{def:zeta}
  \zeta_0(t) := [y_{[t-n+1,t-1]};u_{[t-n+1,t-1]}] \quad \in \R^{2(n-1)}.
\end{align}

\begin{lem}\label{LEM:z}
Under Assumption \ref{ASM:OBS}, there exists a smooth function $\theta:\mathbb{R}^{2(n-1)}\times\mathbb{R} \to \mathbb{R}^{n-1}$, such that the state $z(t)$ of \eqref{SYSTEM} is given by
\begin{align*}
    z(t) = \theta(\zeta_0(t),y(t))
\end{align*}
for all time step $t$.
\end{lem}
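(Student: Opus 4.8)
The plan is to use Assumption \ref{ASM:OBS}, but evaluated at an earlier time instant so that the data it requires become the past input/output history already collected in $\zeta_0(t)$, rather than future measurements. Assumption \ref{ASM:OBS} recovers the state at a given instant from the outputs and inputs over the following $n-1$ steps; the key observation is that applying this identity starting from time $t-n+1$ turns those ``future'' samples into exactly the ``past'' samples stored in $\zeta_0(t)$ together with the current output $y(t)$.

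Concretely, I would first invoke Assumption \ref{ASM:OBS} at the instant $t-n+1$, which gives
\[
z(t-n+1) = \mathcal{O}([y_{[t-n+1,t]};u_{[t-n+1,t-1]}]).
\]
A direct check of the indices shows that the argument of $\mathcal{O}$ consists precisely of the components of $\zeta_0(t)=[y_{[t-n+1,t-1]};u_{[t-n+1,t-1]}]$ together with the current output $y(t)$. Hence $z(t-n+1)$ is already a smooth function of the pair $(\zeta_0(t),y(t))$, because $\mathcal{O}$ is smooth.

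Next I would propagate this state forward to time $t$ through the internal dynamics \eqref{SYSTEM2}. Iterating $z(\tau+1)=h(z(\tau),y(\tau))$ for $\tau=t-n+1,\dots,t-1$ writes $z(t)$ as an $(n-1)$-fold composition of $h$ evaluated along the outputs $y(t-n+1),\dots,y(t-1)$, each of which is a component of $\zeta_0(t)$. Composing this recursion with the expression above for $z(t-n+1)$ produces the desired map $\theta$, and smoothness of $\theta$ is immediate since it is a finite composition of the smooth functions $\mathcal{O}$ and $h$.

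The step requiring the most care is the index bookkeeping in the first move: one must confirm that evaluating $\mathcal{O}$ at time $t-n+1$ draws only on samples up to $y(t)$ and $u(t-1)$, so that strictly the past history plus the current output enters and no genuinely future data are used, and that the subsequent forward recursion through $h$ likewise consumes only the past outputs already present in $\zeta_0(t)$. Once these ranges are verified to align, the definition of $\theta$ and its smoothness follow directly.
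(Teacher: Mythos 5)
Your proposal is correct and follows essentially the same route as the paper's proof: invoke Assumption \ref{ASM:OBS} at time $t-n+1$ so that its required data window $[y_{[t-n+1,t]};u_{[t-n+1,t-1]}]$ coincides with $\zeta_0(t)$ together with $y(t)$, then iterate the internal dynamics \eqref{SYSTEM2} forward $n-1$ steps to express $z(t)$ as a smooth composition of $h$ and $\mathcal{O}$. The index bookkeeping you flag as the delicate step indeed checks out, and it is exactly the substitution the paper performs.
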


\begin{proof}
Since there exists a smooth mapping $\mathcal{O}$ such that $$z(t-n+1) = \mathcal{O}([y_{[t-n+1,t]};u_{[t-n+1,t-1]}])$$ by Assumption~\ref{ASM:OBS}, it follows that
\begin{align*}
z(t) &= h( z(t-1), y(t-1)) \\
&= h( h( z(t-2), y(t-2)), y(t-1)) \\
&= h( h( h( z(t-3), y(t-3)), y(t-2)), y(t-1)) \\
& ~~~~~~~~~~\vdots \\
&= h(\cdots(h(z(t-n+1),y(t-n+1)),\cdots),y(t-1)) \\
&= h(\cdots(h(\mathcal{O}([y_{[t-n+1,t]};u_{[t-n+1,t-1]}]), \\
&~~~~~~~~~~~~~~~~~~~~~~~~~~ y(t-n+1)),\cdots),y(t-1)) \\
&=: \theta(\zeta_0(t),y(t))
\end{align*}
which completes the proof.
\end{proof}

Let us define the vectors 
\begin{align*}
\zeta_1(t) &:= [\zeta_0(t); y(t)] \quad \in \R^{2n-1} \\
\xi(t) &:= [\zeta_1(t); y(t+1)] \quad \in \R^{2n}
\end{align*}
which are composed of an arbitrary input/output trajectory of the system \eqref{SYSTEM}, and suppose that the set $\mathcal{E}$ contains all possible $\xi(t)$.

Define $c:\R^{2n} \to \mathbb{R}$ as
\begin{align*}
c([\zeta_1(t);s]) &:= \frac{s - f(\theta(\zeta_0(t),y(t)),y(t))}{g(\theta(\zeta_0(t),y(t)),y(t))}.
\end{align*}
Then, the ideal control \eqref{idealstate} is generated by
\begin{align}\label{ideal}
u(t) = c([\zeta_1(t); f_r(y(t))]).
\end{align}
For later use, we also define $\mathcal{C}$ as the set of all possible $[\zeta_1(t); f_r(y(t))]$. 
It is noted that $\mathcal{C} \subset \mathcal{E}$ by definition.

The ideal control \eqref{idealstate}, implemented as \eqref{ideal}, is an output feedback control in the sense that it uses input/output data only.
However, there is still a problem that the knowledge of functions $f(\cdot,\cdot)$, $g(\cdot,\cdot)$, and $\theta(\cdot,\cdot)$ are needed for constructing \eqref{ideal}. 
We solve this problem by applying GPR to identify the function $c(\cdot)$ itself.
This idea is feasible by treating $[\zeta_1(t);y(t+1)]$ as input data to the function $c$, and $u(t)$ as output data. 
This is because
$$u(t) = c([\zeta_1(t);y(t+1)])$$
from \eqref{SYSTEM1}.

To perform the proposal, we first collect input/output data{\footnote{The subscript $\dd$ is used for the sample data collected from the system during some experiment.}} of system \eqref{SYSTEM} as
\begin{align}\label{DATASET}
    \{(u_\dd(t),y_\dd(t))\}_{t=1}^{{N}}
\end{align}
where ${N}>n$ is the total number of input/output data.
Then we rearrange the data as the training input
\begin{align*}
    \xi_\dd(t+n-1) &= [\zeta_{1\dd}(t+n-1); y_\dd(t+n)] \\
    &= [ {y_\dd}_{[t,t+n-2]};{u_\dd}_{[t,t+n-2]};   {y_\dd}_{[t+n-1,t+n]}]
\end{align*}
and the training output
$$u_\dd {(t+n-1)},$$
yielding the training dataset:
\begin{align}\label{TD}
    {\mathcal{D}}:=\left\{ (\xi_\dd(t+n-1), u_\dd (t+n-1) )
    \right\}_{t=1}^{N-n}.
\end{align}

\begin{rem}\label{rem:multiple}
It may be difficult to collect sufficiently long ($N \gg n$) input/output sequences as in \eqref{DATASET} if the system is unstable.
In this case, let input/output data collected in the $i$-th experiment be
\begin{align*}
\{(u^i_\dd(t),y^i_\dd(t))\}_{t=1}^{{N}^i},
\end{align*} 
where ${N}^i>n$.
The training input and output samples are rearranged as 
\begin{align*}
\xi^i_\dd(t+n-1) = [ {y^i_\dd}_{[t,t+n-2]};{u^i_\dd}_{[t,t+n-2]};   {y^i_\dd}_{[t+n-1,t+n]}]
\end{align*}
and
$$u^i_\dd {(t+n-1)}$$
for $t=1,\ldots,N^i-n$, respectively.
The training dataset for each $i$-th experiment is defined as
\begin{align*}
    {\mathcal{D}^i}:=\left\{
    (\xi^i_\dd(t+n-1), u_\dd^i (t+n-1) )
    \right\}_{t=1}^{{N}^i-n}.
\end{align*}
By combining all ${\mathcal{D}^i}$, we can still use 
\begin{align*}
    {\mathcal{D}}=\bigcup_{i}{\mathcal{D}^i}
\end{align*} for identifying the function $c(\cdot)$ even if the data of each experiment is obtained from different initial conditions. 
In this case, the total number of input/output data becomes $N=\sum_{i} N^i$.
$\hfill\Box$
\end{rem}

A Gaussian process (GP) is fully specified by a mean function $m: {\mathcal{E}} \to \mathbb{R}$ and a covariance function $k: {\mathcal{E}} \times {\mathcal{E}} \to \mathbb{R}$.
The GP that we use for identifying the function $c(\cdot)$ employs the zero function for the mean function, and the squared exponential (SE) kernel for the covariance function:
\begin{align}\label{SEkernel}
    k({\xi},{\xi}') = \sigma_{f}^2 \text{exp}\left( -\frac12 (\xi-\xi')^T L^{-1} (\xi-\xi') \right)
\end{align}
in which, $\sigma_{f}$ and $L = {\rm diag}(l_{1}^2, \ldots, l_{2n}^2)$ are hyperparameters.
We set the hyperparameters using the marginal likelihood optimization according to Bayesian principles \cite[Chapter~5]{GPML}.

By using the training dataset ${\mathcal{D}}$ in \eqref{TD}, the GP yields the posterior mean and variance functions for a test input $\xi \in \mathcal{E}$
\begin{align}
    \mu_{\mathcal{D}}(\xi) &:= \mathbf{k}^T ({\xi}){\mathbf{K}}^{-1} \mathbf{u}, \label{mu} \\
    \sigma_{\mathcal{D}}(\xi) &:= k ({\xi},{\xi}) - \mathbf{k}^T ({\xi}) {\mathbf{K}}^{-1} \mathbf{k} ({\xi}), \label{sig}
\end{align}
respectively, where 
\begin{align*}
\mathbf{u} &:= [{u}_\dd{(n)};\cdots;{u}_\dd{({N}-1)}],\\
\mathbf{k} ({\xi})&:= [k(\xi_\dd{(n)},{\xi});\cdots;k(\xi_\dd{({N}-1)},{\xi})],\\
\mathbf{K} &:= 
\resizebox{.43\textwidth}{!}{$
\begin{bmatrix}
k(\xi_\dd{(n)},\xi_\dd{(n)}) & \cdots & k(\xi_\dd{(n)},\xi_\dd{({N}-1)})\\
\vdots & \ddots & \vdots \\
k(\xi_\dd{({N}-1)},\xi_\dd{(n)}) & \cdots & k(\xi_\dd{({N}-1)},\xi_\dd{({N}-1)})
\end{bmatrix}.
$}
\end{align*}
The posterior mean and variance functions are not well-defined if, for example, some elements in the set of training inputs $\{ \xi_\dd(t+n-1)\}_{t=1}^{N-n}$ are identical as mentioned in \cite[Remark~3.3]{Kanagawa18}, which hardly occurs in practice.

It is noted that the posterior mean function $\mu_{\mathcal{D}}(\cdot)$ is in fact the estimation result of the function $c(\cdot)$ which is obtained by only the input/output data of the system \eqref{SYSTEM}. On the other hand, the posterior variance function $\sigma_{\mathcal{D}}(\cdot)$ indicates the confidence of the estimation.

Finally, we construct the MR-GPR controller by using the mean function $\mu_{\mathcal{D}}$ in \eqref{mu} as 
\begin{align}\label{controller}
\begin{split}
u(t) &= \mu_{\mathcal{D}} ([\zeta_1(t); f_r(y(t))]) \\
&= \mu_{\mathcal{D}} ([y_{[t-n+1,t-1]}; u_{[t-n+1,t-1]}; y(t); f_r(y(t))])
\end{split}
\end{align}
which is an output feedback controller.
The following theorem shows the convergence of the closed-loop system with the MR-GPR controller \eqref{controller} under a boundedness assumption of the input gain.

\begin{asm}\label{ASM:BD}
There exists $\bar{g}>0$ such that $\lvert g(z,y) \rvert \leq \bar{g}$ for all $(z,y) \in \mathbb{R}^n$. $\hfill\Box$
\end{asm}

While we assume the input gain function $g(\cdot,\cdot)$ to be bounded, if we consider the case where $z(t)$ and $y(t)$ stay in some compact sets, then it is seen that the boundedness directly follows from the smoothness of the input gain function.

\begin{thm}\label{THM1}
Under Assumptions \ref{ASM:1}, \ref{ASM:OBS}, and \ref{ASM:BD}, there exists a class-$\KK$ function $\gamma$ such that, if there exists a dataset ${\mathcal{D}}$ so that
\begin{equation}\label{eq:thmcond}
|\mu_{\mathcal{D}}([\zeta_1;s]) - c([\zeta_1;s])| < \delta,\quad \forall [\zeta_1;s] \in \mathcal{C}
\end{equation}
for a given $\delta > 0$, then, the closed-loop system \eqref{SYSTEM} with the MR-GPR controller \eqref{controller} guarantees
$$\limsup_{t \to \infty} \|[y(t);z(t)]\| < \gamma(\delta).$$
\end{thm}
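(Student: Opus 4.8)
The plan is to show that, with the MR-GPR controller in place, the output subsystem behaves exactly like the reference model \eqref{eq:real_closed} driven by a small additive disturbance whose magnitude is controlled by $\delta$, and then to propagate this smallness through the ISS zero dynamics by a cascade argument. First I would substitute the controller \eqref{controller}, namely $u(t) = \mu_{\mathcal{D}}([\zeta_1(t); f_r(y(t))])$, into the output equation \eqref{SYSTEM1}. By Lemma~\ref{LEM:z} the actual state satisfies $z(t) = \theta(\zeta_0(t),y(t))$, so the defining identity of $c$ gives $f(z(t),y(t)) + g(z(t),y(t))\, c([\zeta_1(t);f_r(y(t))]) = f_r(y(t))$. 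Adding and subtracting $g(z(t),y(t))\, c([\zeta_1(t);f_r(y(t))])$ then yields
$$y(t+1) = f_r(y(t)) + \eta(t), \qquad \eta(t) := g(z(t),y(t))\bigl[\mu_{\mathcal{D}}([\zeta_1(t);f_r(y(t))]) - c([\zeta_1(t);f_r(y(t))])\bigr].$$

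Next I would bound $\eta$. Since the closed-loop trajectory is itself a trajectory of \eqref{SYSTEM}, the vector $[\zeta_1(t);f_r(y(t))]$ lies in $\mathcal{C}$ for every $t$, so the hypothesis \eqref{eq:thmcond} applies pointwise along the trajectory and bounds the bracketed GP error by $\delta$. Combined with the uniform gain bound $|g(z,y)| \le \bar{g}$ from Assumption~\ref{ASM:BD}, this gives $|\eta(t)| < \bar{g}\,\delta$ for all $t$. The crucial observation is that although $\eta(t)$ depends on $z(t)$ through $g$, Assumption~\ref{ASM:BD} makes the bound \emph{uniform} in $z$; this is precisely what prevents the implicit loop $z \to \eta \to y \to z$ from amplifying and lets me treat the two subsystems as a cascade rather than a genuine feedback interconnection.

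I would then invoke the standing assumption that $y_r(t+1) = f_r(y_r(t)) + \eta(t)$ is ISS with respect to $\eta$. This furnishes $\beta_y \in \mathcal{KL}$ and $\alpha_y \in \KK$ with $|y(t)| \le \beta_y(|y(0)|,t) + \alpha_y(\sup_{s}|\eta(s)|)$, whence $\limsup_{t\to\infty}|y(t)| \le \alpha_y(\bar{g}\,\delta)$. Treating $y$ as the input to the zero dynamics \eqref{SYSTEM2}, Assumption~\ref{ASM:1}(b) supplies $\beta_z \in \mathcal{KL}$ and $\alpha_z \in \KK$. Restarting the ISS estimate from an arbitrarily late time $T$ gives $\|z(t)\| \le \beta_z(\|z(T)\|,t-T) + \alpha_z(\sup_{s \ge T}|y(s)|)$ for $t \ge T$; taking $\limsup_{t\to\infty}$ and then letting $T \to \infty$ (so that $\sup_{s\ge T}|y(s)| \to \limsup_{s}|y(s)|$) yields the asymptotic-gain bound $\limsup_{t\to\infty}\|z(t)\| \le \alpha_z(\limsup_{s}|y(s)|) \le \alpha_z(\alpha_y(\bar{g}\,\delta))$, which no longer depends on the initial condition.

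Finally, the triangle inequality $\|[y;z]\| \le |y| + \|z\|$ gives $\limsup_{t\to\infty}\|[y(t);z(t)]\| \le \alpha_y(\bar{g}\,\delta) + \alpha_z(\alpha_y(\bar{g}\,\delta)) =: \gamma(\delta)$, and $\gamma$ is class-$\KK$ as a sum and composition of class-$\KK$ functions together with the linear scaling $\delta \mapsto \bar{g}\,\delta$. The step I expect to be the main obstacle is the rigorous handling of the $z$-dependence of the disturbance $\eta$: one must argue that the uniform bound from Assumption~\ref{ASM:BD} genuinely decouples the loop so that the two ISS asymptotic-gain estimates can be chained, and one must carry the bookkeeping needed to guarantee that $[\zeta_1(t);f_r(y(t))] \in \mathcal{C}$ at every step, so that \eqref{eq:thmcond} is legitimately applicable along the closed-loop trajectory rather than only on an a priori fixed set.
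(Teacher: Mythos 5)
Your proposal is correct and follows essentially the same route as the paper's proof: substitute the controller into \eqref{SYSTEM1} to rewrite the closed loop as the reference model plus a disturbance $e_t = g(z_t,y_t)\{\mu_{\mathcal{D}}(\cdot)-c(\cdot)\}$ bounded by $\bar{g}\delta$, then chain the ISS estimate of the reference model with the ISS estimate of the zero dynamics to build $\gamma$. If anything, your treatment is slightly more careful than the paper's, which states the $z$-bound informally as $\|z_t\| \le \beta_z(\|z_0\|,t) + \gamma_z(|y_t|)$ and takes the limsup directly, whereas you justify the asymptotic-gain step via the restart-at-time-$T$ argument and also make explicit the (glossed-over) point that $[\zeta_1(t);f_r(y(t))] \in \mathcal{C}$ along the closed-loop trajectory.
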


\medskip

\begin{proof}
For notational simplicity, let $y_t$ imply $y(t)$ in this proof.
Applying \eqref{controller} to \eqref{SYSTEM1}, we have
\begin{align*}
y_{t+1} &= f(z_t,y_t) + g(z_t,y_t) \mu_{\mathcal{D}}([\zeta_{1,t};f_r(y_t)]) \\
&= f(z_t,y_t) + g(z_t,y_t) c([\zeta_{1,t};f_r(y_t)]) \\
&\qquad + g(z_t,y_t) \{\mu_{\mathcal{D}}([\zeta_{1,t};f_r(y_t)]) - c([\zeta_{1,t};f_r(y_t)])\} \\
&= f_r(y_t) + e_t
\end{align*}
where 
\begin{align} \label{error}
e_t := g(z_t,y_t) \{\mu_{\mathcal{D}}([\zeta_{1,t};f_r(y_t)]) - c([\zeta_{1,t};f_r(y_t)])\}.
\end{align} 
By the assumption,
$$\lvert e_t \rvert \leq \bar{g}\delta, \qquad \forall t$$
and by the input-to-state stability assumption of the reference model, there are a class-$\mathcal{KL}$ function $\beta_y$ and a class-$\KK$ function $\gamma_y$ such that
$$|y_t| \le \beta_y(|y_0|,t) + \gamma_y( \bar g \delta ).$$

Also from the input-to-state stability of \eqref{SYSTEM2} in Assumption \ref{ASM:1} (b), there exists a class-$\mathcal{KL}$ function $\beta_z$ and a class-$\mathcal{K}$ function $\gamma_z$ such that
\begin{align*}
\|z_t\| &\le \beta_z(\|z_0\|,t) + \gamma_z(|y_t|).
\end{align*}
Therefore, 
\begin{align*}
\limsup_{t\to\infty} |y_t| &\le \gamma_y( \bar g \delta ) \\
\limsup_{t\to\infty} \|z_t\| &\le \gamma_z( \gamma_y(\bar g \delta) )
\end{align*}
so that the function $\gamma$ that completes the proof can be constructed.
\end{proof}

\begin{rem}
We identify the smooth function 
$c(\cdot)$ as $\mu_{\mathcal{D}}(\cdot)$ by the GP with SE kernel. In fact, the posterior variance function $\sigma_{\mathcal{D}}(\cdot)$ in \eqref{sig} can be utilized to measure 
how much the function $\mu_{\mathcal{D}}(\cdot)$, the identification result, differs from the function $c(\cdot)$.
Specifically, if the function $c(\cdot)$ belongs to reproducing kernel Hilbert space generated by the kernel $k$ in \eqref{SEkernel}, then
\begin{align*}
    |\mu_{\mathcal{D}}(\xi)-c(\xi)| \le \beta \sqrt{\sigma_{\mathcal{D}} (\xi)}, \quad \forall \xi\in \mathcal{E}
\end{align*}
for some positive $\beta$ (see \cite[Corollary~3.11]{Kanagawa18} for details).
In addition, \cite[Corollary~3.2]{Lederer19} presents a certain method for data collection, with which it is possible to make the upper bound (the function of the posterior variance) arbitrarily small by using a sufficiently large number of data $N$. 
With the help of these facts, we can compose a dataset $\mathcal{D}$ that satisfies the sufficient condition \eqref{eq:thmcond} in Theorem \ref{THM1} for a given $\delta>0$.
$\hfill\Box$
\end{rem}

\begin{rem}
In order to initiate the output feedback controller \eqref{controller} at time $t=0$, information of $y_{[-n+1,-1]}$ and $u_{[-n+1,-1]}$ is needed.
If the system is initially at rest or at the steady-state, then the information is easy to obtain, but this may not be the typical situation.
Instead, one may apply arbitrary inputs for the initial $(n-1)$ steps.
In fact, the information about $(n-1)$-long input/output sequences is necessary to figure out the information of internal state, which is reminiscent to the classical output feedback controls, in which, the state-feedback control does not have much meaning until a dynamic observer estimates the plant's state.
$\hfill\Box$
\end{rem}

\section{Illustrative Example}\label{sec:ex}
In this section, an illustrative example is presented to describe the utility of the proposed data-driven controller.

Consider the following SISO system
\begin{subequations}\label{ex:toy}
\begin{align}
    y(t+1) &= y^2(t)+z(t)+u(t) \label{ex:y}\\
    z(t+1) &= 0.5\sin(y(t))z(t) \label{ex:zerodyn},  
\end{align}
\end{subequations}
where $u,y,z \in \mathbb{R}$. 
We assume that the system dimension $n=2$ and the global relative degree one are known (Assumption \ref{ASM:1} (a)).
Since the internal dynamics \eqref{ex:zerodyn} is input-to-state stable from $y$ to $z$, the system \eqref{ex:toy} also satisfies Assumption \ref{ASM:1} (b).
Furthermore, Assumption \ref{ASM:OBS} is satisfied by the fact that $z(t)$ is uniquely determined by
\begin{align*}
    z(t) &= y(t+1)-y^2(t)-u(t) \\
    &= \mathcal{O}([y_{[t,t+1]};u(t)]).
\end{align*}
Therefore, we obtain
\begin{align*}
    z(t) &= 0.5\sin(y(t-1))z(t-1) \\
    &= 0.5\sin(y(t-1))\left(y(t)-y^2(t-1)-u(t-1)\right) \\
    &= \theta\left(\zeta_0(t),y(t)\right)
\end{align*}
as in Lemma \ref{LEM:z}.
Noting that
\begin{align*}
    c([\zeta_1(t);y(t+1)]) = y(t+1) - y^2(t) - \theta\left(\zeta_0(t),y(t)\right),
\end{align*}
we compose the training data ${\mathcal{D}}=\bigcup_{i=1}^{T} \mathcal{D}^i$,
where the data $\mathcal{D}^i$ is collected in an experiment with random initial condition $y_\dd(0), z_\dd(0) \in [-1.2,1.2]$ and random input $u_\dd(t) \in [-1.2,1.2]$ for $N^i=5$ time steps for all $i=1,\ldots,T$. In the $i$-th experiment, as in Remark \ref{rem:multiple}, we obtain the data
$$\xi_\dd(t+1)=\left[ y_\dd(t);u_\dd(t);y_\dd(t+1); y_\dd(t+2)\right]$$ 
which is used as a training input and
$$u_\dd(t+1)$$ 
which is considered as a training output for $t=1,2,3$.
Using the training data ${\mathcal{D}}$, we set the hyperparameters in \eqref{SEkernel} by optimizing the marginal likelihood through GPML toolbox \cite{gpmltoolbox}.
Finally, we take a stable reference model as
$$y_r(t+1) = f_r(y_r(t)) = -0.4 y_r(t)$$
which guarantees input-to-state stability for $$y_r(t+1)=-0.4 y_r(t) + \eta(t).$$
Then, the proposed output feedback controller becomes
\begin{align}
\begin{split}
u(t) &= \mu_{\mathcal{D}}([\zeta_1(t); -0.4 y(t)]) \\
&= \mu_{\mathcal{D}}([y(t-1); u(t-1); y(t); -0.4 y(t)]).
\end{split}
\end{align}

Figs.~\ref{FIG:toy_epi20} and \ref{FIG:toy_epi2000} show the output of the closed-loop system with the proposed controller designed by the training data of $T=20$ and $2000$ experiments from different initial conditions, respectively, compared to the one with the ideal controller. 
We set the initial conditions of each system as 
\begin{align*}
  &\left(y(0),z(0)\right) \\
  &\quad\quad\in \{(1.1,1.1), (1.1,-1.1), (-1.1,1.1), (-1.1,-1.1)\}  
\end{align*}
in both Figs.~\ref{FIG:toy_epi20} and \ref{FIG:toy_epi2000}.
In all cases, zero input $u(0)=0$ is used at the very first step of control for applying the MR-GPR controller.
It is observed that in both Figs.~\ref{FIG:toy_epi20} and \ref{FIG:toy_epi2000}, the MR-GPR controller asymptotically stabilizes all systems that have different initial conditions. Also, the MR-GPR controller designed with more data in Fig.~\ref{FIG:toy_epi2000} shows better performance than the one designed with less data in Fig.~\ref{FIG:toy_epi20}.
\begin{figure}[t]
\centering
\includegraphics[width=\columnwidth]{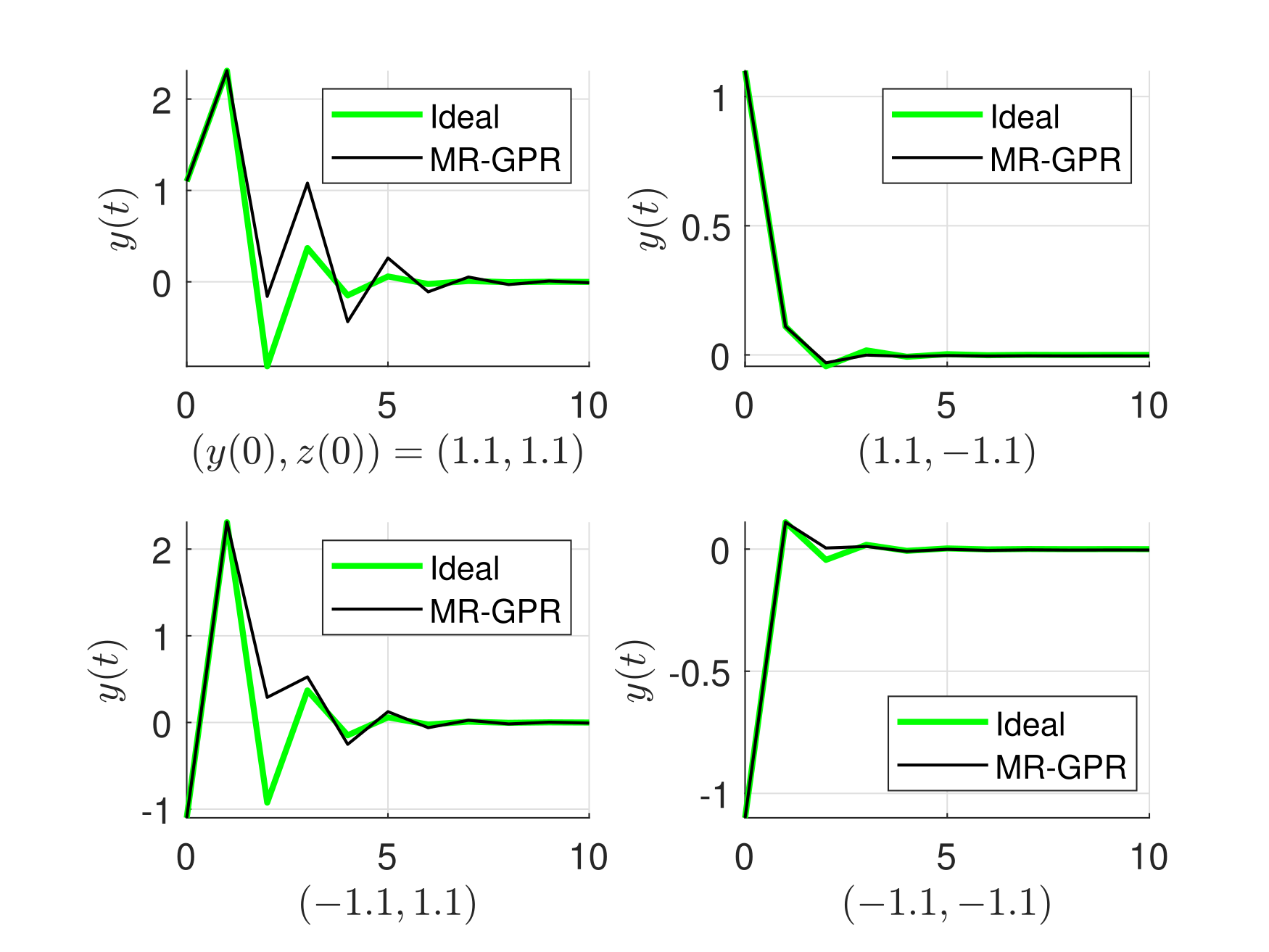}
\caption{Output trajectories of the system \eqref{ex:toy} with ideal 
controller $c$ (green line) and MR-GPR controller $\mu_\mathcal{D}$ (black line) designed by the data of $T=20$ experiments from different initial conditions.\label{FIG:toy_epi20}}
\end{figure}
\begin{figure}[t]
\centering
\includegraphics[width=\columnwidth]{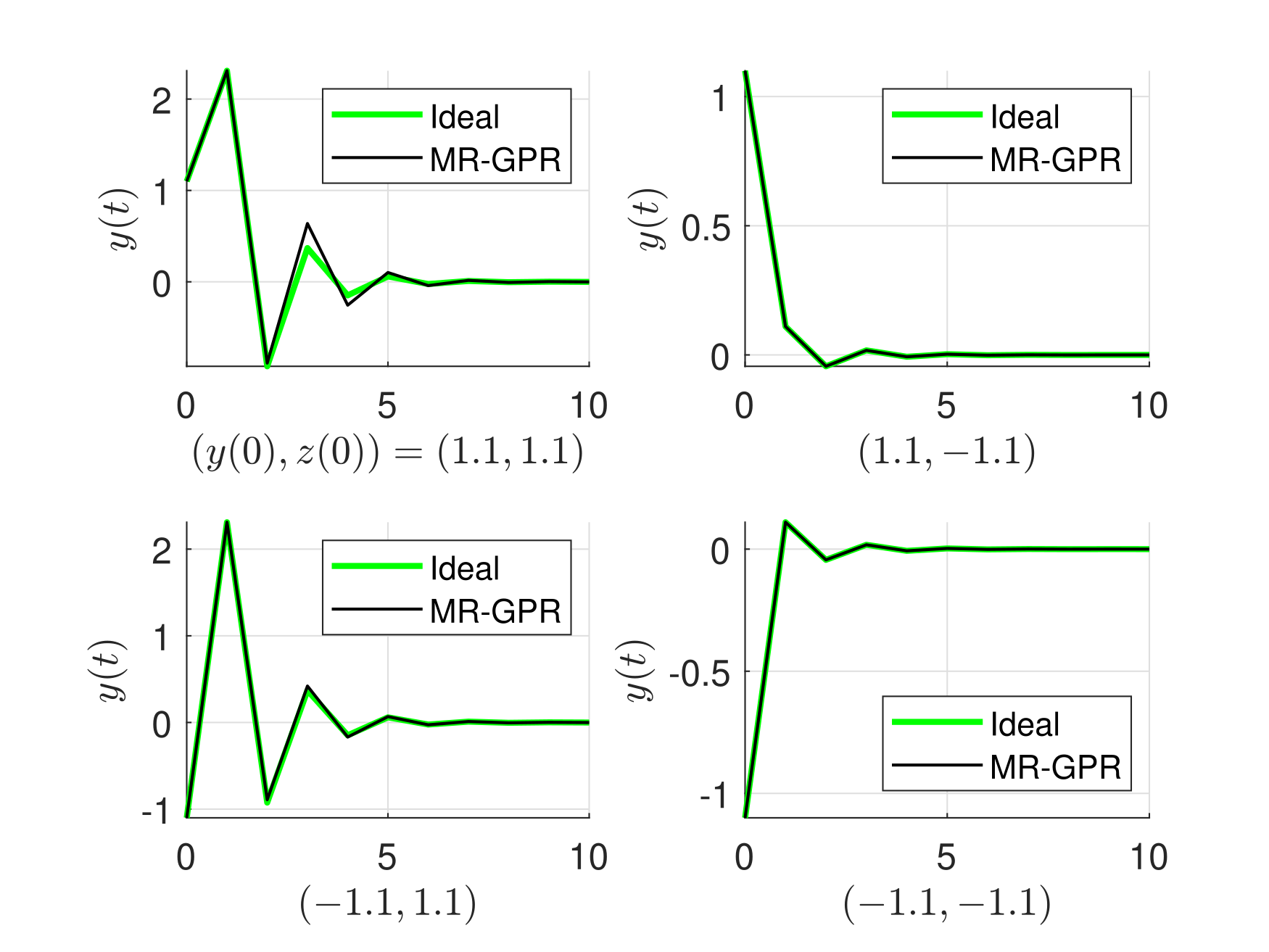}
\caption{Output trajectories of the system \eqref{ex:toy} with ideal 
controller $c$ (green line) and MR-GPR controller $\mu_\mathcal{D}$ (black line) designed by the data of $T=2000$ experiments from different initial conditions.\label{FIG:toy_epi2000}}
\end{figure}
\begin{figure}[t]
\centering
\includegraphics[width=\columnwidth]{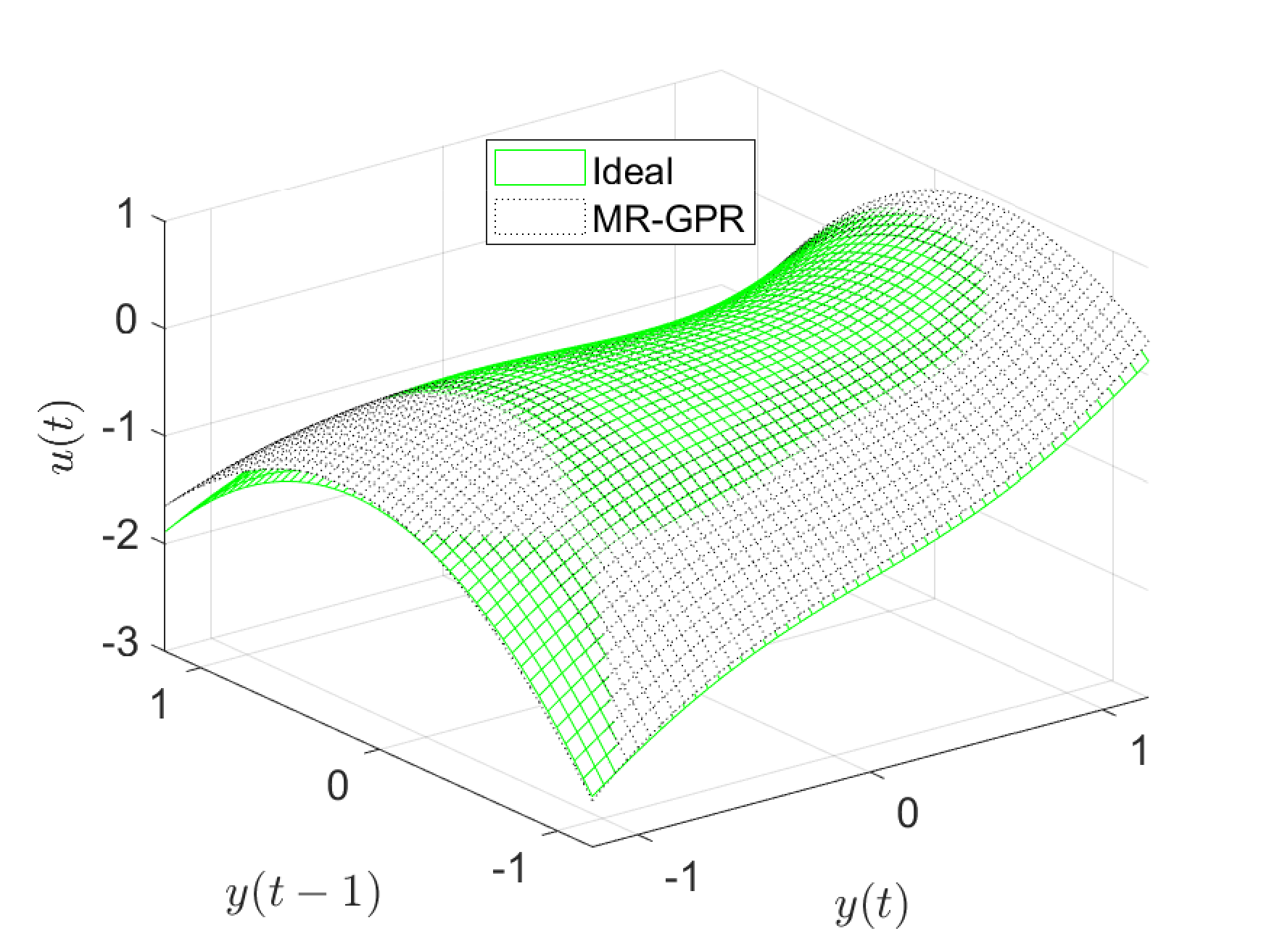}
\caption{Function values of ideal controller $c$ (green mesh) and MR-GPR controller $\mu_{\mathcal{D}}$ (black dotted mesh) designed by the data of $T=20$ experiments. \label{FIG:toy_mesh_20epi}}
\end{figure}
\begin{figure}[t]
\centering
\includegraphics[width=\columnwidth]{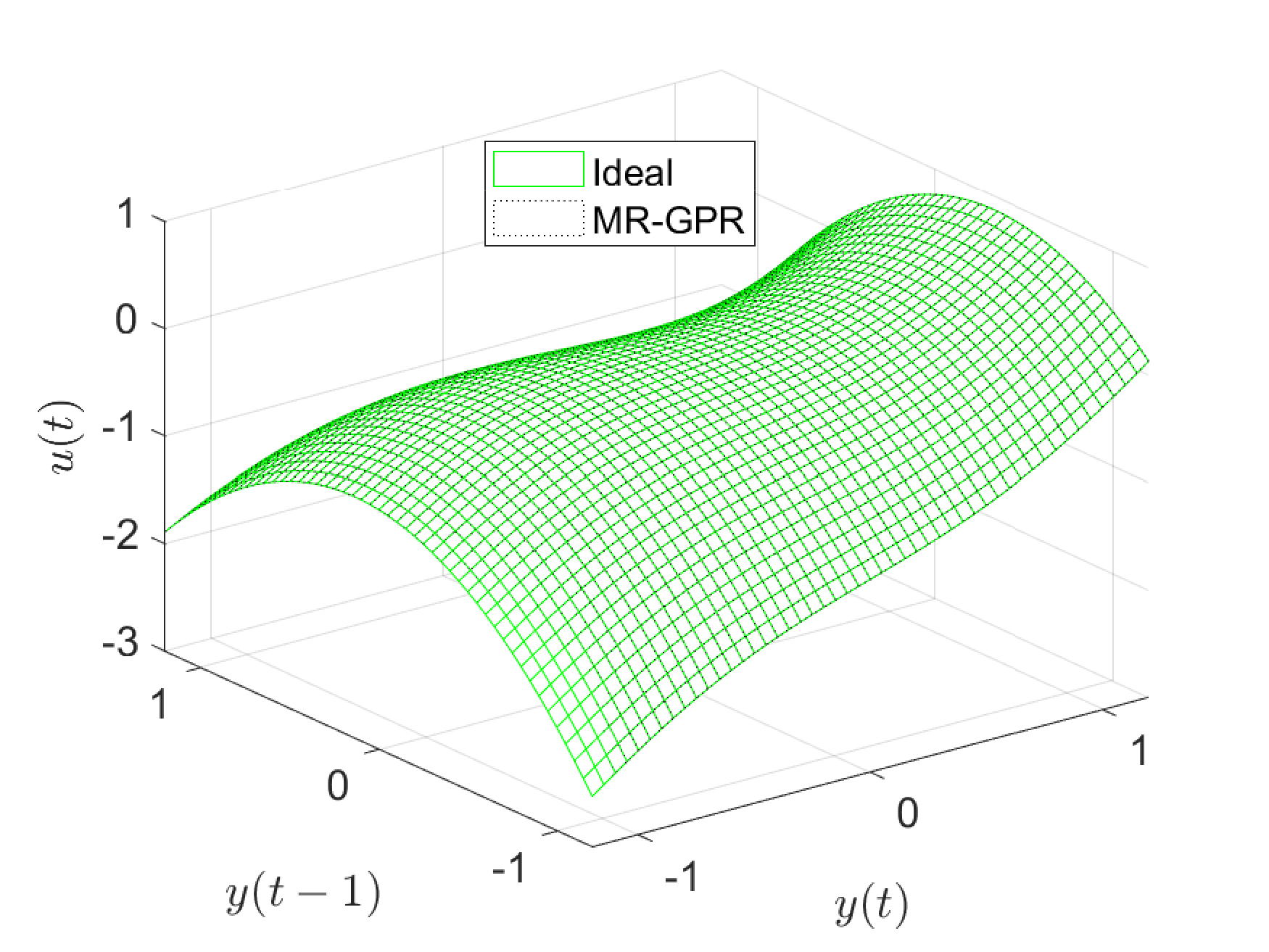}
\caption{Function values of ideal controller $c$ (green mesh) and MR-GPR controller $\mu_{\mathcal{D}}$ (black dotted mesh) designed by the data of $T=2000$ experiments. \label{FIG:toy_mesh_2000epi}}
\end{figure}
\begin{figure}[t]
\centering
\includegraphics[width=\columnwidth]{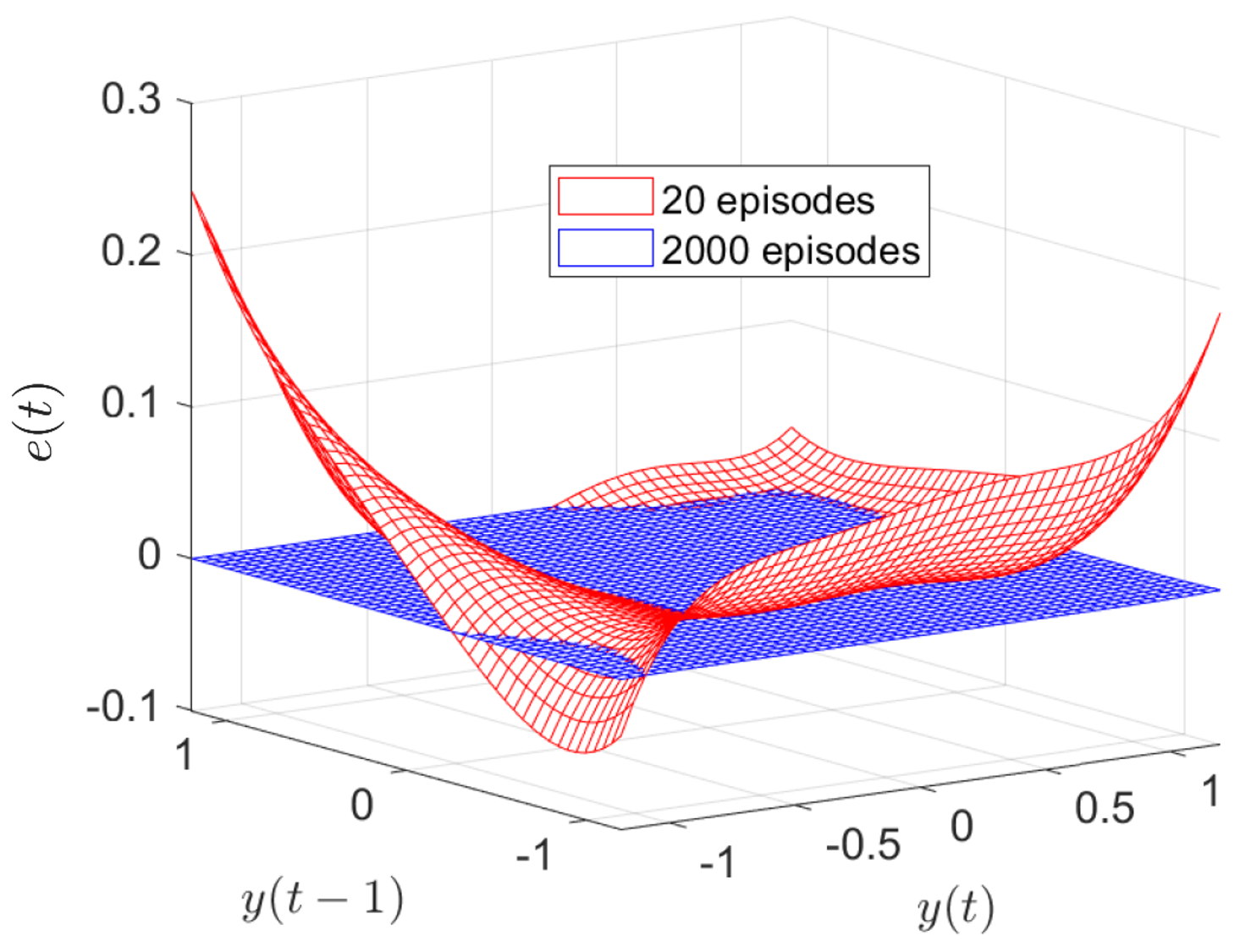}
\caption{Error of function values between ideal controller $c$ and MR-GPR controller $\mu_{\mathcal{D}}$ designed by the data of $T=20$ experiments (red mesh) and $T=2000$ experiments (blue mesh)
\label{FIG:toy_error}}
\end{figure}

On the other hand, Figs.~\ref{FIG:toy_mesh_20epi} and \ref{FIG:toy_mesh_2000epi} depict function values of the MR-GPR controller $\mu_{\mathcal{D}}$ designed by using the data of $T=20$ and $2000$ experiments, respectively, compared to the ideal controller $c$. Although both functions $\mu_{\mathcal{D}}$ and $c$ need an input
\begin{align*}
[\zeta_1(t); -0.4y(t)] =[y(t-1);u(t-1);y(t);-0.4y(t)]
\end{align*}
to be evaluated, we fix the value $u(t-1) = 0.2$ and evaluate both functions by sweeping $y(t-1)$ and $y(t)$ in $[-1.2,1.2]$.
It is seen that the proposed controller $\mu_{\mathcal{D}}$ sufficiently well approximates the ideal controller $c$ throughout the entire domain when $2000$ experiments of data are used in Fig~\ref{FIG:toy_mesh_2000epi}, while the approximation reveals some error particularly at evaluation points which are far from $(0,0)$ when $20$ experiments of data are used Fig~\ref{FIG:toy_mesh_20epi}. 
This is also verified in Fig.~\ref{FIG:toy_error} that plots the error $e_t$ of \eqref{error}, which is a function of $[\zeta_1(t); -0.4y(t)]$.

\section{Conclusion}\label{sec:conc}

In this paper, we proposed the MR-GPR controller, which is the data-driven output feedback controller, for SISO nonlinear discrete-time control-affine systems with relative degree one and of minimum phase. 
The design was performed by using the GPR, trained only by input/output data of the system. 
It is worthy to emphasize that the GPR was utilized not for system identification but for controller design itself.
It was shown that the control performance improves as more data are available for training, which was demonstrated by an illustrative example using simulations.

While we present a new concept of MR-GPR, the class of applicable systems is still restricted, and more study is necessary to extend the applicable system class.
Future research topics include regression of residual nonlinearity in the inversion, extension to multi-input multi-output case, consideration of measurement noises, handling of non-minimum phase systems by a feedback of estimated internal states, and application of the proposed idea to other regression methods \cite{sindy} and \cite{ml}.


\end{document}